\theoremstyle{remark}
\newtheorem{theorem}{\hspace{1em}Theorem}
\newtheorem{lemma}{\hspace{1em}Lemma}
\begin{document}

\title{\LARGE Combating the Control Signal Spoofing Attack in UAV Systems
\thanks{The authors are with the
School of Electronics and Information Engineering, and also with the Ministry
of Education Key Lab for Intelligent Networks and Network Security, Xi'an Jiaotong
University, Xi'an, 710049, Shaanxi, P. R. China. Email: {\tt
xjtu-huangkw@outlook.com,
xjbswhm@gmail.com}.
}
\author{Ke-Wen Huang, \hspace{0.05in} Hui-Ming Wang,~\IEEEmembership{Senior Member,~IEEE}
}
}
\maketitle

\begin{abstract}
     Unmanned aerial vehicle (UAV) system is vulnerable to the control signal spoofing attack due to the openness of the wireless communications. In this correspondence, a physical layer approach is proposed to combat the control signal spoofing attack, i.e,. to determine whether the received control signal packet is from the ground control station (GCS) or a potential malicious attacker (MA), which does not need to share any secret key. We consider the worst case where the UAV does not have any prior knowledge about the MA. Utilizing the channel feature of the angles of arrival, the distance-based path loss, and the Rician-$\kappa$ factor,  we construct a generalized log-likelihood radio (GLLR) test framework to handle the problem. Accurate approximations of the false alarm and successful detection rate are provided to efficiently evaluate the performance.
\end{abstract}

\begin{IEEEkeywords}
Physical layer authentication, spoofing attack, UAV system, generalized likelihood radio,  false alarm rate.
\end{IEEEkeywords}

\section{Introduction}
\label{Sec:Introduction}
Recent years has witnessed the rapid development of the unmanned aerial vehicle (UAV) technique.
Thanks to the high flexibility of the UAV, it finds widespread applications in both civilian and military fields \cite{Vachtsevanos2015Handbook}.
In general, the UAV will connect with the ground control station (GCS) through wireless links for data and control signal exchanges.
However, due to the openness of the wireless environments, the UAV systems are vulnerable to the control signal spoofing attacks \cite{Y.Zeng2016CM}.
More specifically, a malicious attacker (MA) can forge the control signal of the GCS to take over the UAV  illegitimately, which poses a serious threat to the safety of the UAV systems.
Therefore, identifying the source of the received signal at the UAV-side (whether the GCS or any other MA) is of great significance in the UAV systems, which motivates this work.

In this correspondence, we propose to utilize the characteristics of the physical layer channels to combat the control signal spoofing attack.
{
We note that the characteristics of the physical layer channels have already been used to cope with the eavesdropping attack in both the terrestrial communication systems, e.g., in \cite{H.Zhang2016TII}, and the UAV communication systems, e.g., in \cite{Liu2017WCSP,QWang2017ComLett,H.Liu2017ICUFN}, but have not been specifically considered to combat the control signal spoofing attack in the UAV systems.
}
The basic idea of the proposed method in this correspondence comes from the fact that the wireless channels from the GCS and a MA will be different significantly due to the prominent distinction of propagation environments. The MA is not able to imitate the channel characteristics of the GCS.
In general, using the parameters of the physical channel to identify the signal source falls into the problem of physical layer authentication (PLA), which has attracted widespread attention \cite{XiaoLiang2008TWC,P.Baracca2012TWC,J.K.Tugnait2013JSAC,W.Hou2014TCOM,
	J.Liu2016TWC}.
In these works, various channel parameters, such as channel coefficients, power spectral, frequency offsets, and multi-path delays are exploited to perform the PLA.
However, these methods are not specifically suitable for the UAV ground-to-air authentication (GAA). In general, the widely used Rayleigh fading channel model is not applicable to the ground-to-air channel because the scatterers around the UAV is generally sparse, and the ground-to-air channel usually contains a strong line-of-sight (LOS) component. Furthermore, the UAV is under control so some parameters relative to the trajectory will be previously known. None of these UAV-special features has been taken into account in above works. To the best of our knowledge, using the physical layer approach for control signal authentication in a UAV system does not appear in existing literature.

In this correspondence, we design a physical-layer-based method for the UAV to identify the source of the received  signal to combat the control signal spoofing attack.
Utilizing the angles of arrival, the distance-based path loss, and the Rician-$\kappa$ factor of the channel,  we construct a generalized log-likelihood radio (GLLR) test framework to recognize whether the signal comes form the legitimate GCS or the illegitimate MA.
The contributions can be summarized as follows:
1) the GAA problem is considered under the worst case where no prior knowledge of the MA is obtained at the UAV-side; 2) we propose a generalized log-likelihood radio (GLLR) test method to handle the GAA problem, wherein the unknown channel parameters are first estimated; and 3) we perform comprehensive analysis to evaluate the authentication performance, where  accurate approximations for the false alarm rate (FAR) and successful detection rate (SDR) are derived.

\emph{Notations:} $(\cdot)^T$ and $(\cdot)^H$ denote the transpose and conjugate transpose, respectively. $\mathcal{C}^{L}$ denotes the space of $L$-dimensional column vector. The complex Gaussian distribution is denoted by $\mathcal{CN}\left(\bm{m},\bm{R}\right)$ with $\bm{m}$ and $\bm{R}$ being the mean vector and covariance matrix, respectively. $\mathcal{G}\left(a,b\right)$ means the Gamma distribution with $a$ and $b$ denoting the shape and scale parameters, respectively.  $\bm{v}_{(i,j)}$, for $j>i$, denotes a vector which is comprised of the $i^{\mathrm{th}},(i+1)^{\mathrm{th}},\cdots,j^{\mathrm{th}}$ elements in vector $\bm{v}$.

\section{System Model}
\label{Sec:SystemModel}
A comprehensive system model is provided in Fig. \ref{Fig:Model}, which consists of a multi-antenna UAV with $L$ antennas, a single antenna GCS, and a single antenna MA. The UAV is under the control of the GCS and the MA performs control signal spoofing attack by sending a deceitful control signal which follows exact the same format of the real control signal. {We note that in this correspondence, we only consider a single MA for simplicity. The cases with multiple MAs are generally complicated and thus left for future research.}
For simplicity and clarity, all the parameters in this paper that are labeled with subscript $0$ ($1$) means that they are associated with the GCS (MA).

For $i\in\{0,1\}$, assume that the azimuth and zenith angles of arrival of the control signal   at the UAV are denoted by $\left(\theta_i,\phi_i\right)$, then the ground-to-air channels are given by
$\bm{f}_i\triangleq d_i^{-\alpha/2}\left(\sqrt{1/(1+\kappa_i)}\bm{a}_i +
\sqrt{\kappa_i/(1+\kappa_i)}\bm{h}_i\right)\in \mathcal{C}^{L}$, where  $\alpha$ is the exponential path loss factor, $d_i$ denotes the distance, $\kappa_i$ is the Rician-$\kappa$ factor,
$\bm{a}_i\triangleq\bm{a}\left(\omega_i,\mu_i\right)$, satisfying $\left\|\bm{a}_i\right\|^2 = L$, is the steering vector of the antenna array \footnote{The specific form of the steering vector, i.e., $\bm{a}(\omega,\mu)$, is provided in simulation part in Section IV.} with $\omega_i\triangleq\sin\left(\theta_i\right)\cos\left(\phi_i\right)$ and $\mu_i\triangleq\sin\left(\theta_i\right)\sin\left(\phi_i\right)$, which composes the LOS component of the channel, and $\bm{h}_i$ follows $\mathcal{CN}\left(\bm{0},\bm{I}_L\right)$ which composes the non-LOS component of the channel. For notational simplicity, we define $\lambda_i\triangleq\sqrt{1/(1+\kappa_i)}$ and $\delta_i\triangleq\sqrt{1 - \lambda_i^2}$ \footnote{The system model can be further extended to the cases where the GCS and the MA are equipped with multiple antennas. In fact, after performing the beamforming, the multi-antenna GCS and the multi-antenna MA degrade to a single-antenna GCS and a single-antenna MA, respectively. As for the UAV, our model requires that the UAV is equipped with a two-dimensional array to insure that the physical channel is a function of both the azimuth and zenith angles of arrival.}.

Since the UAV is under the control of the GCS and the trajectory will be known a priori,  we assume the UAV has the prior knowledge of the GCS at some instance, i.e., $\{d_0,\lambda_0,\omega_0,\mu_0\}$, but does not know $\{d_1,\lambda_1,\omega_1,\mu_1\}$ of the MA.
The GAA problem here is that after receiving a control signal packet, the UAV need to determine whether this packet is sent by the GCS based on its prior knowledge on $\{d_0,\lambda_0,\omega_0,\mu_0\}$.
\begin{figure}[!t]
\begin{center}
\includegraphics[width=2.5 in]{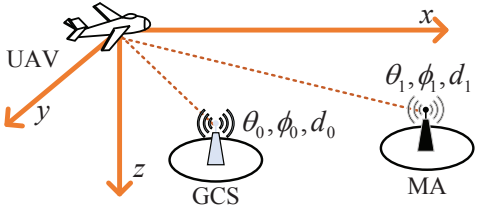}
\caption{System Model.}\label{Fig:Model}
\end{center}
\vspace{-7mm}
\end{figure}

\section{Physical Layer Authentication Scheme}
\label{Sec:PLAScheme}
We provide our PLA scheme in this section.
Assume each received packet includes a previously known training sequence, denoted by $\bm{s}\in\mathcal{C}^{L_s}$, where $L_s$ is the sequence length.
In general, the training sequence is used to estimate the instantaneous channel, which can be further used to identify the signal source because different locations have different channels.
Without loss of generality, we assume that $\left\|\bm{s}\right\|^2=1$.
The received training sequence at the UAV is written as
\begin{align}
\bm{Y} \triangleq \left\{
\begin{aligned}
&\sqrt{P_0}\bm{f}_0\bm{s}^H+\bm{N}, &&\mathcal{H}_0,\\
&\sqrt{P_1}\bm{f}_1\bm{s}^He^{\mathrm{j}\psi}+\bm{N}, &&\mathcal{H}_1,
\end{aligned}
\right.\label{OriginalReceivedSignal}
\end{align}
where $\mathcal{H}_0$ ($\mathcal{H}_1$) denotes the hypothesis that the packet is sent by the GCS (MA), $P_i$ denotes the transmit power, $\bm{N}$ is the noise with each of its element following $\mathcal{CN}\left(0,\sigma_{\bm{N}}^2\right)$, and $\psi$ is a phase rotation set by the MA which is unknown by the UAV.
Matching $\bm{Y}$ with $\bm{s}$, we obtain
\begin{align}
\bm{y} \triangleq \bm{Y}\bm{s} = \left\{
\begin{aligned}
&\sqrt{P_0}d_0^{-\alpha/2}\left(\lambda_0\bm{a}_0 + \delta_0\bm{h}_0 \right)+\bm{n}, &&\mathcal{H}_0,\\
&\sqrt{P_1}d_1^{-\alpha/2}\left(\lambda_1\bm{a}_1 + \delta_1\bm{h}_1 \right)e^{\mathrm{j}\psi}+\bm{n}, &&\mathcal{H}_1,
\end{aligned}
\right.\nonumber
\end{align}
where we have $\bm{n}\triangleq\bm{N}\bm{s}\sim\mathcal{CN}\left(\bm{0},\sigma^2\bm{I}_L\right)$ with $\sigma^2=\sigma_{\bm{N}}^2/L_s$.
We assume here that $P_0$ is known by the UAV but $P_1$ is not.
Once obtain $\bm{y}$, we propose to check the GLLR to determine the source of the received signal, either the GCS or the MA.
The GLLR test is given by
\begin{align}
T &\triangleq \frac{1}{L}\ln\frac{\max\limits_{\mathcal{I}_1,\psi}f\left(\bm{y}|\mathcal{H}_1,\mathcal{I}_1,\psi\right)}
{f\left(\bm{y}|\mathcal{H}_0,\mathcal{I}_0\right)}\gtreqless_{\mathcal{H}_0}^{\mathcal{H}_1} \tau, \label{GLLRInitial}
\end{align}
where $\mathcal{I}_i = \{P_i, d_i, \lambda_i, \omega_i,\mu_i\}$ for $i \in \{0,1\}$, and $\tau$ is the decision threshold.

As we can see from \eqref{GLLRInitial}, the basic steps of the GLLR test is that we first estimate the unknown $\{P_1, d_1, \lambda_1, \omega_1,\mu_1,\psi\}$ from $\bm{y}$ according to the maximum likelihood criterion by assuming that $\mathcal{H}_1$ is true. Then, the probability density function (PDF) of $\bm{y}$ conditioned on $\mathcal{H}_0$ is compared with the maximized PDF of $\bm{y}$ condition on $\mathcal{H}_1$ through the log-likelihood function. The final decision between $\mathcal{H}_0$ and $\mathcal{H}_1$ depends on the obtained log-likelihood radio $T$ and the decision threshold $\tau$.

In the following part of this section, we first provide the estimation result of $\{P_1, d_1, \lambda_1, \omega_1,\mu_1,\psi\}$. After that, a semi-closed form approximation of the FAR are proposed to efficiently determine the decision threshold $\tau$. Finally, we provide an approximation of the SDR to evaluate the performance of the proposed authentication method.

\subsection{Parameter Estimation}
In this subsection, we derive the estimation result of the unknown $\{P_1, d_1, \lambda_1, \omega_1,\mu_1,\psi\}$ in \eqref{GLLRInitial}, which is an indispensable step to check the GLLR.

If $\mathcal{H}_1$ is true, $\bm{y}$ follows $\mathcal{CN}\left(x_1\lambda_1\bm{a}_1e^{\mathrm{j\psi}},\epsilon_1^2\bm{I}_L\right)$, where
$x_1\triangleq \sqrt{P_1}d_1^{-\alpha/2}$.
We observe that $P_1$ is always coupled with $d_1$ in the form of $\sqrt{P_1}d_1^{-\alpha/2}$. Therefore it is sufficient to only estimate $x_1$, instead of estimating both $P_1$ and $d_1$.
The maximum likelihood estimation of $\left\{x_1, \lambda_1, \omega_1,\mu_1,\psi\right\}$ can be written as a maximization problem given by
\begin{align}
&\{x_1^*, \lambda_1^*, \omega_1^*,\mu_1^*,\psi^*\} =
\mathop\mathrm{argmax}\limits_{x, \lambda, \omega,\mu,\psi} \ln f\left(\bm{y}|\mathcal{H}_1,x, \lambda, \omega,\mu,\psi\right)
\label{OrignalProblem}
\end{align}
where $\bar{\bm{y}} \triangleq \bm{y}/\sqrt{L}$ and $\bar{\bm{a}}\left(\omega,\mu\right)\triangleq\bm{a}\left(\omega,\mu\right)/\sqrt{L}$. The estimation results are provided in the following theorem.

\begin{theorem}
The solution of \eqref{OrignalProblem} is given as follows
\begin{subequations}
\label{ParameterEstimation}
\begin{align}
&\left(\omega_1^*,\mu_1^*\right)
= \mathop\mathrm{argmax}\limits_{\omega,\mu} \quad \left|\bar{\bm{y}}^H\bar{\bm{a}}\left(\omega,\mu\right)\right|^2, \label{OmigaMuEstimation}\\
&x_1^*=\left\{
\begin{matrix}
\left\|\bar{\bm{y}}\right\|^2-\sigma^2,&
\text{if }\Xi^{*}\geq0,\\
\left|\bar{\bm{y}}^H\bar{\bm{a}}\left(\omega_1^*,\mu_1^*\right)\right|^2,&\text{if }\Xi^{*}<0,\\
\end{matrix}\right.\\
&\left(\lambda_1^*\right)^2=\left\{
\begin{matrix}
\left(\left\|\bar{\bm{y}}\right\|^2-\sigma^2\right)/\left|\bar{\bm{y}}^H
\bar{\bm{a}}\left(\omega_1^*,\mu_1^*\right)\right|^2,
\text{if }\Xi^{*}\geq0,\\
\quad 1,\quad\quad\quad\quad\quad\quad\quad\quad\quad\quad\quad\ \ \text{if }\Xi^{*}<0,\\
\end{matrix}\right.\\
&e^{-\mathrm{j}\psi^*}=
\bar{\bm{y}}^H\bar{\bm{a}}\left(\omega_1^*,\mu_1^*\right)/
\left|\bar{\bm{y}}^H\bar{\bm{a}}\left(\omega_1^*,\mu_1^*\right)\right|,
\end{align}
\end{subequations}
where $\bar{\bm{y}} \triangleq \bm{y}/\sqrt{L}$, $\bar{\bm{a}}\left(\omega,\mu\right)\triangleq\bm{a}\left(\omega,\mu\right)/\sqrt{L}$, and $\Xi^*\triangleq\left\|\bar{\bm{y}}\right\|^2-\left|\bar{\bm{y}}^H
\bar{\bm{a}}\left(\omega_1^*,\mu_1^*\right)\right|^2-\sigma^2$ and the value of $\left(\omega_1^*,\mu_1^*\right)$ can be extensively searched within the region of $\left(-1,1\right)\times\left(-1,1\right)$ in a two-dimensional real space.
\end{theorem}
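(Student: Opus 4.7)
The plan is to collapse problem (4) by a sequence of nested one-variable maximizations, exploiting the Gaussian structure of the likelihood. Under $\mathcal{H}_1$ the vector $\bm{y}$ is $\mathcal{CN}(x\lambda\bm{a}(\omega,\mu)e^{\mathrm{j}\psi}, \epsilon^2\bm{I}_L)$ with $\epsilon^2 = x^2(1-\lambda^2)+\sigma^2$, so
\begin{equation*}
\ln f = -L\ln(\pi\epsilon^2) - \frac{\|\bm{y}-x\lambda\bm{a}(\omega,\mu)e^{\mathrm{j}\psi}\|^2}{\epsilon^2}.
\end{equation*}
First I would eliminate $\psi$: expanding the squared norm shows that $\psi$ appears only through $-2\mathrm{Re}(x\lambda e^{-\mathrm{j}\psi}\bm{y}^H\bm{a})$, so the phase-alignment choice $e^{-\mathrm{j}\psi^*}=\bar{\bm{y}}^H\bar{\bm{a}}/|\bar{\bm{y}}^H\bar{\bm{a}}|$ is optimal regardless of the other unknowns, giving (5d). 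After this substitution $(\omega,\mu)$ enters only through $|\bar{\bm{y}}^H\bar{\bm{a}}(\omega,\mu)|$, and since the coefficient $x\lambda$ is nonnegative the log-likelihood is maximized by enlarging this magnitude, producing (5a) and reducing the search to the bounded box $(-1,1)\times(-1,1)$.

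What remains is a joint maximization over $(x,\lambda)\in[0,\infty)\times[0,1]$, and this is the main obstacle, since both the mean and the variance $\epsilon^2$ depend jointly on these variables, so naive coordinatewise differentiation is messy. The key trick is to reparameterize by the LOS amplitude $A\triangleq x\lambda$ and the NLOS power $B\triangleq x^2(1-\lambda^2)$. This map is a bijection from $[0,\infty)\times[0,1]$ onto $[0,\infty)^2$ with inverse $x^2=A^2+B$ and $\lambda^2=A^2/(A^2+B)$. Setting $Q\triangleq\|\bar{\bm{y}}\|^2$ and $R\triangleq|\bar{\bm{y}}^H\bar{\bm{a}}(\omega_1^*,\mu_1^*)|^2$, the reduced log-likelihood becomes
\begin{equation*}
\frac{\ln f}{L} = -\ln(\pi(B+\sigma^2)) - \frac{Q-2A\sqrt{R}+A^2}{B+\sigma^2},
\end{equation*}
which nicely decouples $A$ and $B$: the quadratic numerator is minimized over $A\geq 0$ at $A^*=\sqrt{R}$, independently of $B$.

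For the final step I would set $t\triangleq B+\sigma^2\in[\sigma^2,\infty)$ and analyze $h(t)=-\ln(\pi t)-(Q-R)/t$. The unconstrained stationary point is $t^\circ=Q-R$, where $h''<0$, so this is the global interior maximum whenever it is admissible. Two cases then follow: if $\Xi^*\geq 0$, i.e.\ $Q-R\geq\sigma^2$, the stationary point lies in the feasible region and is the maximizer, giving $B^*=Q-R-\sigma^2$ and hence $x^{*2}=A^{*2}+B^*=Q-\sigma^2$ together with $\lambda^{*2}=R/(Q-\sigma^2)$; otherwise $h'(t)<0$ throughout $[\sigma^2,\infty)$ and the maximum is attained at the boundary $B^*=0$, so $\lambda^{*2}=1$ and $x^{*2}=R$. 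Translating $Q$ and $R$ back to $\bar{\bm{y}}$ and $\bar{\bm{a}}(\omega_1^*,\mu_1^*)$ recovers (5b)--(5c), which together with the formula for $e^{-\mathrm{j}\psi^*}$ completes the verification of Theorem 1.
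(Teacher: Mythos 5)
Your proof is correct and follows essentially the same route as the paper: eliminate $\psi$ by phase alignment, reparameterize $(x,\lambda)$ into the LOS amplitude $A=x\lambda$ and NLOS power $B=x^2(1-\lambda^2)$ (identical to the paper's $z$ and $t$ substitutions), decouple, and split on the sign of $\Xi^*$. The only cosmetic difference is that you fix $(\omega,\mu)$ early via monotonicity in $\left|\bar{\bm{y}}^H\bar{\bm{a}}(\omega,\mu)\right|$ rather than substituting the optimal amplitude expressions back and then maximizing over $(\omega,\mu)$ as the paper does; both orderings are valid.
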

\begin{proof}
To solve \eqref{OrignalProblem}, our basic idea is that we first derive the expression of $x_1, \lambda_1, \psi$ with respect to $\left(\omega,\mu\right)$, denoted by $x_1\left(\omega,\mu\right)$, $\lambda_1\left(\omega,\mu\right)$, and $\psi\left(\omega,\mu\right)$. Then, we only need to search for $\left(\omega_1^*,\mu_1^*\right)$. Based on \eqref{OrignalProblem}, we first solve
\begin{align}
&\left(x_1\left(\omega,\mu\right), \lambda_1\left(\omega,\mu\right), \psi\left(\omega,\mu\right)\right)
\nonumber \\
&=
\mathop\mathrm{argmin}\limits_{x, \lambda,\psi} \left\{
\begin{aligned}
&\frac{\left\|\bar{\bm{y}}-x\lambda\bar{\bm{a}}\left(\omega,\mu\right)e^{\mathrm{j}\psi}\right\|^2}
{x^2\left(1-\lambda^2\right)+\sigma^2}\\
&\quad\quad +\ln\left(x^2\left(1-\lambda^2\right)+\sigma^2\right)
\end{aligned} \right\}.\label{Step1}
\end{align}
Obviously from \eqref{Step1}, we obtain that
\begin{align}
&\psi\left(\omega,\mu\right) =
\mathop\mathrm{argmin}\limits_{\psi} \quad \left\|\bar{\bm{y}}-x\lambda\bar{\bm{a}}\left(\omega,\mu\right)e^{\mathrm{j}\psi}\right\|^2\nonumber \\
&\Rightarrow
e^{-\mathrm{j}\psi\left(\omega,\mu\right)}=\bar{\bm{y}}^H\bar{\bm{a}}\left(\omega,\mu\right)/
\left|\bar{\bm{y}}^H\bar{\bm{a}}\left(\omega,\mu\right)\right|.
\label{Estimate:psi}
\end{align}
Inserting \eqref{Estimate:psi} into \eqref{Step1}, we have
\begin{align}
&\left(x_1\left(\omega,\mu\right),\lambda_1\left(\omega,\mu\right)\right) \nonumber\\
&=
\mathop\mathrm{argmin}\limits_{x, \lambda} \left\{
\begin{aligned}&\frac{
\left\|\bar{\bm{y}}\right\|^2+
x^2\lambda^2
-2x\lambda\left|\bar{\bm{y}}^H\bar{\bm{a}}\left(\omega,\mu\right)\right|}
{x^2\left(1-\lambda^2\right)+\sigma^2} \\
&
\quad\quad\quad+\ln\left(x^2\left(1-\lambda^2\right)+\sigma^2\right)
\end{aligned}\right\}.\label{xlambda}
\end{align}
To solve \eqref{xlambda}, we make a change of the variables:
$x\rightarrow x$ and $\lambda x\rightarrow z$.
Note that within $\{x>0,0\leq\lambda\leq1\}$, this is an one-to-one map. Therefore, it is equivalent to solve
\begin{align}
\mathop\mathrm{min}\limits_{0\leq z\leq x} \quad \left\{
\begin{aligned}
&\frac{
\left\|\bar{\bm{y}}\right\|^2+
z^2
-2z\left|\bar{\bm{y}}^H\bar{\bm{a}}\left(\omega,\mu\right)\right|}
{x^2-z^2+\sigma^2}\\
&\quad\quad\quad+\ln\left(x^2-z^2+\sigma^2\right)
\end{aligned}\right\}.\label{xz}
\end{align}
We further make a change of variables:
$z\rightarrow z$ and $x^2-z^2\rightarrow t$.
Note that, in the region of $0\leq z\leq x$, this is still an one-to-one map, and therefore, \eqref{xz}
is equivalent to
\begin{align}
\mathop\mathrm{min}\limits_{t\geq0,z\geq0} \quad \frac{
\left\|\bar{\bm{y}}\right\|^2+
z^2
-2z\left|\bar{\bm{y}}^H\bar{\bm{a}}\left(\omega,\mu\right)\right|}
{t+\sigma^2}
+\ln\left(t+\sigma^2\right)\label{tz}.
\end{align}
From \eqref{tz}, it is obviously that $z\left(\omega,\mu\right)=\left|\bar{\bm{y}}^H\bar{\bm{a}}\left(\omega,\mu\right)\right|
$.
Substituting $z\left(\omega,\mu\right)$ into \eqref{tz} directly leads to
\begin{align}
&t\left(\omega,\mu\right)=\mathop\mathrm{argmin}\limits_{t>0} \quad \frac{
\left\|\bar{\bm{y}}\right\|^2-\left|\bar{\bm{y}}^H\bar{\bm{a}}\left(\omega,\mu\right)\right|^2}
{t+\sigma^2}
+\ln\left(t+\sigma^2\right)\nonumber\\
&=\left\{
\begin{matrix}
\left\|\bar{\bm{y}}\right\|^2-\left|\bar{\bm{y}}^H\bar{\bm{a}}\left(\omega,\mu\right)\right|^2-\sigma^2,&
\text{if }\Xi\left(\omega,\mu\right)\geq0,\\
0,&\text{if }\Xi\left(\omega,\mu\right)<0,\\
\end{matrix}\right. \label{Estimate:t}
\end{align}
where $\Xi\left(\omega,\mu\right)\triangleq
\left\|\bar{\bm{y}}\right\|^2-\left|\bar{\bm{y}}^H\bar{\bm{a}}\left(\omega,\mu\right)\right|^2-\sigma^2$.
Based on $z\left(\omega,\mu\right)$ and $t\left(\omega,\mu\right)$, we can obtain the expressions for $x_1\left(\omega,\mu\right)$ and $\lambda_1\left(\omega,\mu\right)$. Inserting $x_1\left(\omega,\mu\right)$, $\lambda_1\left(\omega,\mu\right)$ and $\psi\left(\omega,\mu\right)$ into \eqref{OrignalProblem}, we finally obtain that
\begin{align}
\left(\omega_1^*,\mu_1^*\right) = \mathop{\mathrm{argmax}}\limits_{-1\leq\omega\leq 1,-1\leq\mu\leq 1}\quad
\left|\bar{\bm{y}}^H\bar{\bm{a}}\left(\omega,\mu\right)\right|^2.\label{OmegaMu}
\end{align}
Substituting $\left(\omega_1^*,\mu_1^*\right)$ into $x_1\left(\omega,\mu\right)$, $\lambda_1\left(\omega,\mu\right)$, $\psi\left(\omega,\mu\right)$, we can obtain \eqref{ParameterEstimation}.
\end{proof}
Inserting \eqref{ParameterEstimation} into \eqref{GLLRInitial}, the GLLR test can be simplified as
\begin{align}
T &=\frac{\left\|\bar{\bm{y}}-\alpha_0\bar{\bm{a}}_0\right\|^2}
{\epsilon_0^2}
-\ln \frac{\left\|\bar{\bm{y}}\right\|^2 - \big(\max\limits_{\omega,\mu}\ \left|\bar{\bm{y}}^H\bar{\bm{a}}\left(\omega,\mu\right)\right|^2\big)}{\epsilon_0^2}\nonumber \\
&\gtreqless_{\mathcal{H}_0}^{\mathcal{H}_1} \tau, \label{FinalDetection}
\end{align}
where for notational simplicity, we define $\alpha_i\triangleq\sqrt{P_i}d_i^{-\alpha/2}\lambda_i$ and $\epsilon_i^2\triangleq P_id_i^{-\alpha}\delta_i^2+\sigma^2$, for $i\in\{0,1\}$.
Note that in \eqref{FinalDetection}, we only consider the cases where $\Xi^*\geq0$. We emphasize that this is a reasonable assumption because: 1) when $\lambda<1$, $\left\|\bar{\bm{y}}\right\|^2-\left|\bar{\bm{y}}^H\bar{\bm{a}}\left(\omega^*,\mu^*\right)\right|^2>0$ with probability one, and 2) the noise power, i.e., $\sigma^2$, can be significantly reduced by increasing the length of training sequence.
{ From \eqref{FinalDetection}, we can see that the computational complexity of the GLLR test mainly depends on the two-dimensional searching of $\left(\omega^*,\mu^*\right)$. For a given searching step size in each dimension, denoted by $\iota$, then the whole complexity is around $o\left(1/\iota^2\right)$.}
Now, based on \eqref{FinalDetection}, we need to design a proper decision threshold $\tau$ to ensure a satisfying authentication performance, which will be discussed in the following subsection.

\subsection{Decision Threshold Design}
\label{FARDTD}
In the proposed system, it is hard to evaluate the Bayesian decision risk because the prior knowledge about the MA is generally absent. Therefore, we resort to the Neyman-Pearson criterion \cite{M.Barkat} to determine the decision threshold. Namely,
$\tau$ is chosen such that the FAR is fixed to some value $\eta$, i.e., $\mathcal{P}_{\mathrm{FA}}\left(\tau\right)\triangleq
\mathcal{P}\left\{T>\tau|\mathcal{H}_0\right\}=\eta$.
However, the exact distribution of $T$ is very hard to obtain due to the fact it involves an exhaustive two-dimensional searching and there is no closed-form for the calculation of $T$. This means that it is even harder to obtain an analytical expression for the FAR.
To handle this problem, a computationally much more efficient approximation of the FAR is provided in the following lemma.
\begin{lemma}
\label{Lemma:EstimationFARadio}
Conditioned on $\mathcal{H}_0$, the complementary cumulative distribution function (CCDF) of $T$ in \eqref{FinalDetection} can be approximated by the CCDF of $\tilde{T}_k^{\mathrm{FA}}\triangleq X_k + Y_k - \ln Y_k$ in the sense that they have a common lower bound, where $X_k\sim\mathcal{G}\left(k,1/L\right)$, $Y_k\sim\mathcal{G}\left(L-k,1/L\right)$, and $k$ can be any integer within $\{2,3,\cdots,L-1\}$.
\end{lemma}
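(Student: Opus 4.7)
The plan is to approximate the intractable term $M \triangleq \max_{\omega,\mu}|\bar{\bm{y}}^H\bar{\bm{a}}(\omega,\mu)|^2$ in \eqref{FinalDetection} by the energy of $\bar{\bm{y}}$ in a carefully chosen $k$-dimensional subspace of the steering manifold, and then to identify the resulting statistic with $\tilde{T}_k^{\mathrm{FA}}$ by direct computation of Gamma distributions. Under $\mathcal{H}_0$ I write $\bar{\bm{y}}=\alpha_0\bar{\bm{a}}_0+\bar{\bm{n}}$ with $\bar{\bm{n}}\sim\mathcal{CN}(\bm{0},(\epsilon_0^2/L)\bm{I}_L)$ and extend $\bar{\bm{a}}_0$ to an orthonormal basis $\{\bar{\bm{e}}_1=\bar{\bm{a}}_0,\bar{\bm{e}}_2,\ldots,\bar{\bm{e}}_L\}$ of $\mathbb{C}^L$, in which $\bar{\bm{e}}_2,\ldots,\bar{\bm{e}}_k$ are additional unit steering vectors $\bar{\bm{a}}(\omega_i,\mu_i)$ that can be arranged mutually orthogonal at DFT-spaced grid angles of the array geometry.

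I then substitute $M$ by $\hat M_k \triangleq \sum_{i=1}^{k}|\bar{\bm{y}}^H \bar{\bm{e}}_i|^2$, the energy of $\bar{\bm{y}}$ in $\mathrm{span}\{\bar{\bm{e}}_i\}_{i=1}^k$. Two-sided control is the key: because each $\bar{\bm{e}}_i$ ($i\le k$) itself lies in the steering manifold, $M\ge\max_{i\le k}|\bar{\bm{y}}^H\bar{\bm{e}}_i|^2$, while by Cauchy--Schwarz $\hat M_k$ is precisely the supremum of $|\bar{\bm{y}}^H\bar{\bm{a}}|^2$ over all unit vectors $\bar{\bm{a}}$ in the chosen $k$-dim subspace. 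Both $M$ and $\hat M_k$ are therefore governed by the same Gamma sufficient statistics $\{|\bar{\bm{y}}^H\bar{\bm{e}}_i|^2\}_{i=1}^k$, so the CCDFs of $T$ and of the surrogate obtained by substituting $\hat M_k$ for $M$ admit a common reference CCDF as a lower bound, which is the sense of the lemma.

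The distributions then fall out. Since $\bar{\bm{e}}_j\perp\bar{\bm{a}}_0$ for $j\ge 2$, the scalars $\bar{\bm{y}}^H\bar{\bm{e}}_j=\bar{\bm{n}}^H\bar{\bm{e}}_j$ ($j\ge 2$) together with $\bar{\bm{y}}^H\bar{\bm{e}}_1-\alpha_0$ are i.i.d.\ $\mathcal{CN}(0,\epsilon_0^2/L)$, whence $(\|\bar{\bm{y}}\|^2-\hat M_k)/\epsilon_0^2=\sum_{j=k+1}^{L}|\bar{\bm{n}}^H\bar{\bm{e}}_j|^2/\epsilon_0^2\sim\mathcal{G}(L-k,1/L)=:Y_k$, independent of $X_k\triangleq\|\bar{\bm{n}}\|^2/\epsilon_0^2-Y_k\sim\mathcal{G}(k,1/L)$. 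Because the first term of \eqref{FinalDetection} under $\mathcal{H}_0$ equals $\|\bar{\bm{n}}\|^2/\epsilon_0^2=X_k+Y_k$, substituting $M\to\hat M_k$ yields exactly $X_k+Y_k-\ln Y_k=\tilde T_k^{\mathrm{FA}}$. The hard part will be quantifying the approximation gap in the sandwich $\max_{i\le k}|\bar{\bm{y}}^H\bar{\bm{e}}_i|^2\le M\le \hat M_k$—the upper inequality relying on the 2-parameter steering manifold being essentially captured by the $k$-dim subspace at moderate $k$—and arguing that the induced error in the $-\ln$ term is negligible in the tail regime used to set the FAR threshold.
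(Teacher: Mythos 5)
Your surrogate statistic and its distributional identification are exactly the paper's: after rotating into an orthonormal basis whose first vector is $\bar{\bm{a}}_0$, the first term of $T$ under $\mathcal{H}_0$ is $\|\hat{\bar{\bm{g}}}_0\|^2\sim\mathcal{G}(L,1/L)$ (split as $X_k+Y_k$) and the energy in the last $L-k$ coordinates is $Y_k\sim\mathcal{G}(L-k,1/L)$ independent of $X_k$, so the third paragraph of your argument is sound. Where you diverge is in how you connect $T$ to the surrogate. The paper never attempts to upper bound $M=\max_{\omega,\mu}|\bar{\bm{y}}^H\bar{\bm{a}}(\omega,\mu)|^2$: it uses only the one-sided bound $M\ge|\bar{\bm{y}}^H\bar{\bm{a}}_0|^2$ (the true angle is a feasible point of the maximization), which yields $T\ge\|\bm{u}\|^2-\ln\|\bm{u}_{(2,L)}\|^2$ with $\bm{u}=\bm{U}_0^H\hat{\bar{\bm{g}}}_0$, and then observes that $\tilde{T}_k^{\mathrm{FA}}=\|\bm{u}\|^2-\ln\|\bm{u}_{(k+1,L)}\|^2$ dominates the same random variable because $\|\bm{u}_{(k+1,L)}\|^2\le\|\bm{u}_{(2,L)}\|^2$. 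That is the entire content of the ``common lower bound'' sense; no tightness of the approximation is proved, only checked by simulation.

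Two parts of your route are genuine soft spots. First, the upper half of your sandwich, $M\le\hat{M}_k$, is not provable: the steering manifold is a two-parameter family not contained in any fixed $k$-dimensional subspace for $k<L$, so the maximizer can carry energy outside $\mathrm{span}\{\bar{\bm{e}}_1,\dots,\bar{\bm{e}}_k\}$. You correctly flag this as ``the hard part,'' but it is also unnecessary: the lemma as stated needs only the common lower bound, which already follows from $M\ge|\bar{\bm{y}}^H\bar{\bm{e}}_1|^2$ together with $\hat{M}_k\ge|\bar{\bm{y}}^H\bar{\bm{e}}_1|^2$. Second, your construction requires $k-1$ steering vectors mutually orthogonal to each other and to $\bar{\bm{a}}_0$; for the T-shaped two-dimensional array used in the paper such a DFT-type orthogonal grid is not guaranteed to exist, whereas the paper needs only $\bar{\bm{a}}_0$ itself to lie on the manifold and completes the basis with arbitrary orthonormal vectors $\bm{b}_{0,1},\dots,\bm{b}_{0,L-1}$. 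Dropping both extras collapses your argument onto the paper's proof.
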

\begin{proof}
The basic idea for the proof is that we first find a lower bound on $T$, and then we proof that $\tilde{T}_k^{\mathrm{FA}}$ is also larger than the lower bound. Conditioned on $\mathcal{H}_0$, we have
\begin{align}
T
&\geq
\left\|\hat{\bar{\bm{g}}}_0\right\|^2
-\ln \left(\left\|\hat{\bar{\bm{y}}}\right\|^2 - \left|\hat{\bar{\bm{y}}}^H\bar{\bm{a}}_0\right|^2\right) \overset{(a)}{=}\left\|\bm{u}\right\|^2
-\ln \left\|\bm{u}_{(2,L)}\right\|^2\nonumber\\
&\leq \tilde{T}_k^{\mathrm{FA}}\triangleq \left\|\bm{u}\right\|^2
-\ln \left\|\bm{u}_{(k+1,L)}\right\|^2,\nonumber\\
&=\underbrace{\left\|\bm{u}_{(1,k)}\right\|^2}_{\triangleq X_k} + \underbrace{\left\|\bm{u}_{(k+1,L)}\right\|^2
-\ln \left\|\bm{u}_{(k+1,L)}\right\|^2}_{\triangleq Y_k - \ln Y_k},\nonumber
\end{align}
where $\hat{\bar{\bm{g}}}_0\triangleq\left(\bar{\bm{y}}-\alpha_0\bar{\bm{a}}_0\right)/\epsilon_0$, and step $(a)$ is obtain by defining $\bm{u}\triangleq\left[u_1,u_2,\cdots,u_L\right]^H=
\bm{U}_0^H\hat{\bar{\bm{g}}}_0\sim\mathcal{CN}\left(0,\frac{1}{L}\bm{I}\right)$, and $\bm{U}_0\triangleq[\bar{\bm{a}}_0,\bm{b}_{0,1},\bm{b}_{0,2},\cdots,\bm{b}_{0,L-1}]$ being a unitary matrix.
\end{proof}
Based on Lemma \ref{Lemma:EstimationFARadio}, we can use $1-\mathcal{P}\{\tilde{T}_k^{\mathrm{FA}}<\tau\}$ as an approximation of $\mathcal{P}_{\mathrm{FA}}\left(\tau\right)$, where we have
\begin{align}
&\mathcal{P}\{\tilde{T}_k^{\mathrm{FA}}<\tau\}=\frac{L^L}{\Gamma(k)\Gamma(L-k)}
\sum_{j=0}^{k-1}\binom{k-1}{j}(-1)^{k-j-1}\nonumber \\
&\quad\times\left(
\frac{\gamma\left(L-1,L;\underline{q},\overline{q}\right)}
{L-j-1}-\frac{\gamma\left(j,j+1;\underline{q},\overline{q}\right)}
{e^{(L-j-1)\tau}\left(L-j-1\right)}\right),\label{FARApproxmation}
\end{align}
with $\overline{q}$ and $\underline{q}$ denoting the roots of
$e^{x - \tau}=x$ within $[\tau,+\infty)$ and $(0,\tau]$, respectively, and $\gamma\left(n,a;b,c\right)\triangleq\int_{b}^{c}x^{n}e^{-ax}\mathrm{d}x=
\sum_{j=0}^n\frac{n!}{(n-j)!a^{j+1}}\left(b^{n-j}e^{-ab}-c^{n-j}e^{-ac}\right)$.
The derivation of $\mathcal{P}\{\tilde{T}_k^{\mathrm{FA}}<\tau\}$ is provided in Appendix.
Searching $\tilde{\tau}$ that satisfies $\mathcal{P}\{\tilde{T}_k^{\mathrm{FA}}>\tilde{\tau}\}=\eta$, we can obtain an approximation of $\tau$ that satisfies the FAR constraint.
We have to emphasize here that though $\mathcal{P}\{\tilde{T}_k^{\mathrm{FA}}>\tau\}$ is neither an upper bound nor a lower bound of the FAR, it serves as an approximation of FAR with high accuracy, which will be shown in the simulations.
\begin{figure}[!t]
	\begin{center}
		\includegraphics[width=3.5 in ]{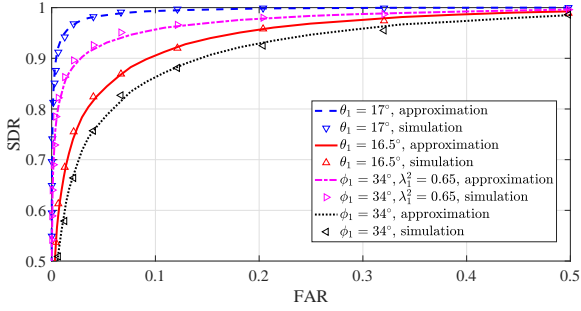}
		\caption{\small SDR v.s. FAR, where $\lambda_0^2=0.8$.}\label{Fig:Approximation}
	\end{center}
	\vspace{-5mm}
\end{figure}

\subsection{SDR Evaluation}
We now provide the performance evaluation of the above method by deriving the SDR in this subsection.
We have to point out that the exact SDR is hard to obtain due to the complicated form of $T$.
To efficiently check the authentication performance, we have the following lemma, which provides an approximation of the SDR.

\begin{lemma}
\label{LowerBoundSDR}
For a given position of the MA, conditioned on $\mathcal{H}_1$, the CCDF of $T$ can be approximated by the CCDF of $\tilde{T}^{\mathrm{SD}}\triangleq \tilde{X}+\tilde{Y}-\ln\tilde{Y}$ in the sense that they have a common lower bound, where $\tilde{Y}\sim\mathcal{G}\left(L-2,1/\varrho\right)$, and $\tilde{X}$ is a scaled non-centric chi-square random variable whose PDF is
\begin{align}
f_{\tilde{X}}\left(x\right)&=(\varrho/\left\|\bm{\beta}\right\|)\sqrt{x} e^{-\varrho\left(x+\left\|\bm{\beta}\right\|^2\right)}
I_1\left(2\varrho\sqrt{x}\right)\mathbb{I}\left\{x>0\right\},
\nonumber
\end{align}
with $I_v(x)$ being the modified bessel function of first kind with order $v$, $\mathbb{I}\{\cdot\}$ being the indicator function, $\varrho\triangleq\epsilon_0^2L/\epsilon_1^2$, $\bm{\beta}\triangleq\left[\beta_{1},\cdots,\beta_{L}\right]^T=\bm{U}_1^H\bm{\Delta}$, $\bm{U}_1\triangleq\left[\bar{\bm{a}}_1,\bm{b}_{1,1},\bm{b}_{1,2},\cdots,\bm{b}_{1,L-1}\right]$ being a unitary matrix, and $\bm{\Delta}\triangleq (1/\epsilon_0)(\alpha_1\bar{\bm{a}}_1e^{\mathrm{j}\psi} - \alpha_0\bar{\bm{a}}_0)$.
\end{lemma}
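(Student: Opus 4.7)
The plan is to mirror the proof of Lemma~\ref{Lemma:EstimationFARadio}, now working under $\mathcal{H}_1$ and in the basis aligned with the MA steering vector $\bar{\bm{a}}_1$ rather than $\bar{\bm{a}}_0$. Under $\mathcal{H}_1$ one has $\bar{\bm{y}}\sim\mathcal{CN}(\alpha_1\bar{\bm{a}}_1e^{\mathrm{j}\psi},(\epsilon_1^2/L)\bm{I}_L)$, so $\hat{\bar{\bm{g}}}_1\triangleq(\bar{\bm{y}}-\alpha_0\bar{\bm{a}}_0)/\epsilon_0\sim\mathcal{CN}(\bm{\Delta},(1/\varrho)\bm{I}_L)$, and hence $\bm{u}\triangleq\bm{U}_1^H\hat{\bar{\bm{g}}}_1\sim\mathcal{CN}(\bm{\beta},(1/\varrho)\bm{I}_L)$. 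By the rotational invariance of the isotropic Gaussian noise I would, without loss of generality, choose $\bm{b}_{1,1}$ inside $\mathrm{span}\{\bar{\bm{a}}_0,\bar{\bm{a}}_1\}$; this forces $\bm{\beta}=[\beta_1,\beta_2,0,\ldots,0]^T$ with $|\beta_1|^2+|\beta_2|^2=\|\bm{\beta}\|^2$.

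Next, I would construct the common lower bound. Replacing $\max_{\omega,\mu}|\bar{\bm{y}}^H\bar{\bm{a}}(\omega,\mu)|^2$ by its value at $(\omega_1,\mu_1)$ and rewriting in the basis $\bm{U}_1$ yields $T\geq\|\bm{u}\|^2-\ln\|\bm{v}_{(2,L)}\|^2$, where $\bm{v}\triangleq\bm{U}_1^H(\bar{\bm{y}}/\epsilon_0)=\bm{u}+(\alpha_0/\epsilon_0)\bm{U}_1^H\bar{\bm{a}}_0$. The above choice of $\bm{U}_1$ makes $\bm{U}_1^H\bar{\bm{a}}_0$ supported on its first two entries, so $\bm{v}_{(2,L)}$ differs from $\bm{u}_{(3,L)}$ only by a shift in its leading coordinate, giving $\|\bm{v}_{(2,L)}\|^2\geq\|\bm{u}_{(3,L)}\|^2$. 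Setting $\tilde{X}\triangleq|u_1|^2+|u_2|^2$ and $\tilde{Y}\triangleq\|\bm{u}_{(3,L)}\|^2$, this rearranges into $\|\bm{u}\|^2-\ln\|\bm{v}_{(2,L)}\|^2\leq\tilde{X}+\tilde{Y}-\ln\tilde{Y}=\tilde{T}^{\mathrm{SD}}$, so the quantity $\|\bm{u}\|^2-\ln\|\bm{v}_{(2,L)}\|^2$ serves as the common lower bound on both $T$ and $\tilde{T}^{\mathrm{SD}}$.

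It remains to identify the two laws appearing in $\tilde{T}^{\mathrm{SD}}$. Because $\beta_j=0$ for $j\geq 3$, the entries $u_3,\ldots,u_L$ are i.i.d.\ $\mathcal{CN}(0,1/\varrho)$, so each $|u_j|^2$ is exponential with rate $\varrho$ and $\tilde{Y}\sim\mathcal{G}(L-2,1/\varrho)$. The pair $(u_1,u_2)$ is independent of $\bm{u}_{(3,L)}$ and jointly complex Gaussian with mean $(\beta_1,\beta_2)$ and covariance $(1/\varrho)\bm{I}_2$, so $2\varrho\tilde{X}$ is noncentral $\chi^2$ with $4$ degrees of freedom and noncentrality $2\varrho\|\bm{\beta}\|^2$; a change of variables then recovers the density stated for $\tilde{X}$. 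The main obstacle is the bookkeeping for this last step: verifying that the joint distribution of $(\tilde{X},\tilde{Y})$ depends on $\bm{\beta}$ only through $\|\bm{\beta}\|$ (so that the WLOG choice of $\bm{U}_1$ is innocuous) and matching scale factors in the noncentral $\chi^2_4$ density against the declared form of $f_{\tilde{X}}$.
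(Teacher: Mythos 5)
Your proposal is correct and follows essentially the same route as the paper: lower-bound $T$ by evaluating the maximization at $(\omega_1,\mu_1)$, pass to a basis whose first column is $\bar{\bm{a}}_1$, and discard one coordinate inside the logarithm to obtain the common lower bound; your single choice of $\bm{U}_1$ with $\bm{b}_{1,1}\in\mathrm{span}\{\bar{\bm{a}}_0,\bar{\bm{a}}_1\}$ merely merges the paper's two successive rotations ($\bm{U}_1$ followed by $\check{\bm{U}}_1$ acting on the trailing $L-1$ coordinates) into one, yielding the same $\tilde{X}$ and $\tilde{Y}$ in distribution. One incidental remark: your noncentral-$\chi^2_4$ computation gives $I_1\left(2\varrho\left\|\bm{\beta}\right\|\sqrt{x}\right)$ in the density, which indicates the lemma's stated $I_1\left(2\varrho\sqrt{x}\right)$ contains a typo rather than any error in your argument.
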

\begin{proof}
Conditioned on $\mathcal{H}_1$, we have
\begin{align}
T
&\geq
\left\|\hat{\bar{\bm{g}}}_{1}+
\bm{\Delta}\right\|^2-\ln \left(\left\|\hat{\bar{\bm{y}}}\right\|^2 - \left|\hat{\bar{\bm{y}}}^H\bar{\bm{a}}_1\right|^2\right)\nonumber\\
&\overset{(a)}{=}\left\|\bm{\beta}+\bm{v}\right\|^2
-\ln \left\|\bm{v}_{(2,L)}\right\|^2
\nonumber\\
&\overset{(b)}{=}\left|\beta_1+v_1\right|^2 + \left|\rho+z_2\right|^2 + \left\|\bm{z}_{(2,L-1)}\right\|^2
-\ln \left\|\bm{z}\right\|^2\nonumber\\
&\leq \tilde{T}^{\mathrm{SD}}\nonumber \\
&\triangleq \underbrace{\left|\beta_1+v_1\right|^2 + \left|\rho+z_2\right|^2}_{\triangleq\tilde{X}} + \underbrace{\left\|\bm{z}_{(2,L-1)}\right\|^2
-\ln\left\|\bm{z}_{(2,L-1)}\right\|^2}_{\triangleq\tilde{Y}-\ln\tilde{Y}},\nonumber
\end{align}
where $\hat{\bar{\bm{g}}}_1 \triangleq (1/\epsilon_0)(\bar{\bm{y}}-\alpha_1\bar{\bm{a}}_1e^{\mathrm{j}\psi})\sim
\mathcal{CN}\left(\bm{0},(1/\varrho)\bm{I}_L\right)$, $\hat{\bar{\bm{y}}}\triangleq(1/\epsilon_0)\bar{\bm{y}}$,
step $(a)$ is obtained by
$\bm{v}\triangleq[v_1,v_2,\cdots,v_L]^T=\bm{U}_1^H\hat{\bar{\bm{g}}}_1\sim
\mathcal{CN}\left(\bm{0},(1/\varrho)\bm{I}_L\right)$,
and step $(b)$ is obtained by
$\rho\triangleq\left\|\bm{\beta}_{(2,L)}\right\|$,
$\bm{z}\triangleq[z_2,z_3,\cdots,z_L]^T= \check{\bm{U}}_1^H\bm{v}_{(2,L)}\sim
\mathcal{CN}\left(\bm{0},(1/\varrho)\bm{I}_{L-1}\right)$,
and $\check{\bm{U}}_1 \triangleq [\bm{\beta}_{(2,L)}/\left\|\bm{\beta}_{(2,L)}\right\|, \bm{c}_2,\cdots,\bm{c}_{L-1}]$ being a unitary matrix.
\end{proof}

\begin{figure*}
\begin{align}
&\mathcal{P}\left\{\tilde{T}^{\mathrm{SD}}>\tau\right\}=1-\iint\limits_{x+y-\ln y<\tau} f_{\tilde{X}}(x)f_{\tilde{Y}}(y)\mathrm{d}y\mathrm{d}x=
1-\int_{\underline{q}}^{\overline{q}} \int_{e^{u-\tau}}^u f_{\tilde{X}}(u-v)f_{\tilde{Y}}(v)\mathrm{d}v\mathrm{d}u\nonumber\\
&=1-\frac{\varrho^2}{\left\|\bm{\beta}\right\|}e^{-\varrho\left\|\bm{\beta}\right\|^2}
\frac{\varrho^{L-2}}{\Gamma(L-2)}
\sum_{k=0}^{+\infty}\frac{\varrho ^{2k}}{k!\Gamma(k+2)}\int_{\underline{q}}^{\overline{q}}\int_{e^{u-\tau}}^u
(u-v)^{k+1}v^{L-3}e^{-\varrho u}
\mathrm{d}v\mathrm{d}u\nonumber\\
&=1 - \frac{\varrho^2}{\left\|\bm{\beta}\right\|}e^{-\varrho\left\|\bm{\beta}\right\|^2}
\frac{\varrho^{L-2}}{\Gamma(L-2)}
\sum_{k=0}^{+\infty}\frac{\varrho ^{2k}}{k!\Gamma(k+2)}\sum_{j=0}^{k+1}\binom{k+1}{j}(-1)^{k+1-j}\nonumber\\
&\quad\quad\quad\times
\frac{\gamma(k+L-1,\varrho; \underline{q}, \overline{q}) - e^{-\left(k-j+L-1\right)\tau}\gamma(j,\varrho+j+1-k-L; \underline{q}, \overline{q})}{k-j+L-1}.
\label{SDRApproximation}
\end{align}
\noindent\rule[-1mm]{\textwidth}{1pt}
\end{figure*}
Based on Theorem \ref{LowerBoundSDR}, we can use $\mathcal{P}\{\tilde{T}^{\mathrm{SD}}>\tau\}$ to approximate the SDR. The calculation of $\mathcal{P}\{\tilde{T}^{\mathrm{SD}}>\tau\}$ is provided in \eqref{SDRApproximation} on the top of next page. The derivation of \eqref{SDRApproximation} is quite similar to that of \eqref{FARApproxmation} in Appendix, and thus the detailed steps are omitted. In the simulation part, we will show that the proposed approximation in \eqref{SDRApproximation} is very accurate to evaluate the SDR.
\begin{figure*}[t]
\begin{align}
&\mathcal{P}\left\{\tilde{T}_k^{\mathrm{FA}}<\tau\right\}=\mathcal{P}\left\{X_k+Y_k-\ln Y_k<\tau\right\}\overset{(a)}{=}\int_{\underline{q}}^{\overline{q}}\int_{e^{u-\tau}}^u
f_{X_k,Y_k}(u-v,v)\mathrm{d}v\mathrm{d}u\nonumber\\
&\overset{(b)}{=}\frac{L^L}{\Gamma(k)\Gamma(L-k)}
\sum_{j=0}^{k-1}\binom{k-1}{j}(-1)^{k-j-1}
\int_{\underline{q}}^{\overline{q}}
u^je^{-Lu}
\int_{e^{u-\tau}}^uv^{L-j-2}\mathrm{d}v\mathrm{d}u\overset{(c)}{=}\mathrm{Eqn.}~\eqref{FARApproxmation}.
\label{DerivationofFAR}
\end{align}
\noindent\rule[-1mm]{\textwidth}{1pt}
\end{figure*}

\section{Numerical Results}
\label{Sec:NumResults}
\begin{figure}[t]
\begin{center}
\subfigure[SDR v.s. $\theta_1$, where $\lambda_0^2=0.8$.]{
    \label{dJA:sub1} 
    \includegraphics[width=1.62 in]{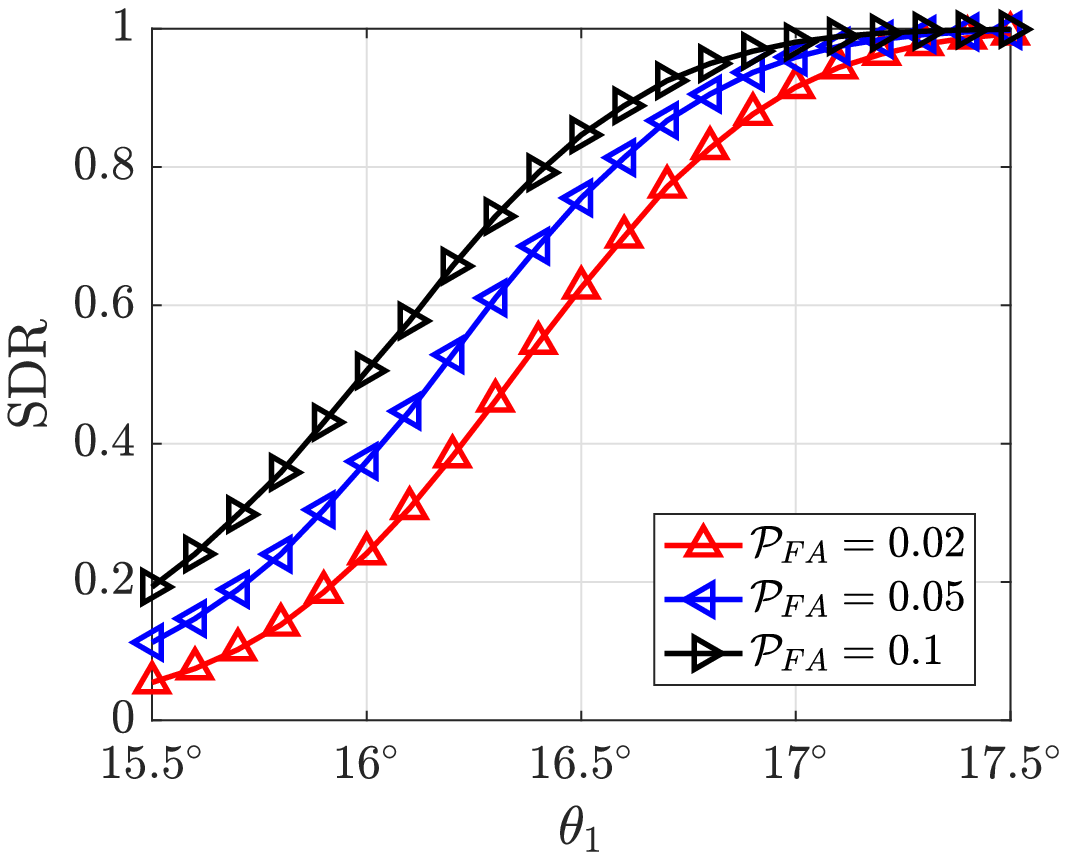}}
    \hspace{.01in}
\subfigure[SDR v.s. $\phi_1$, where $\lambda_0^2=0.8$.]{
    \label{dJA:sub2} 
    \includegraphics[width=1.62 in]{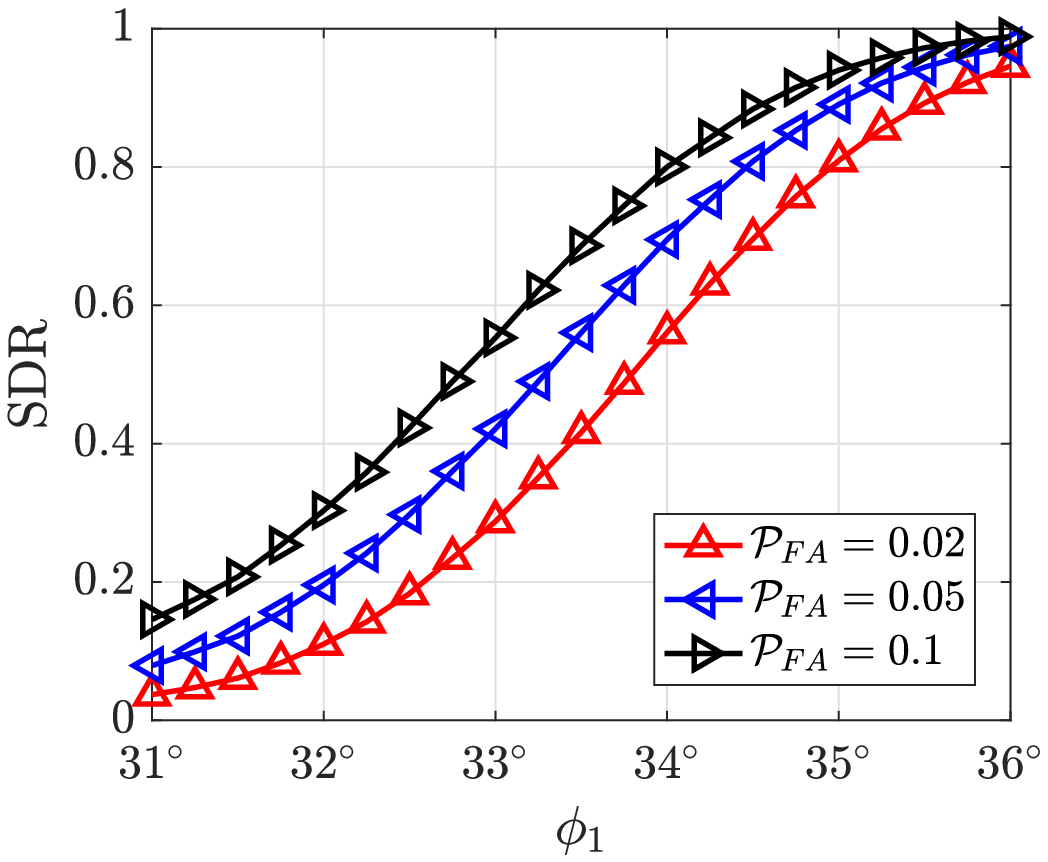}}
    \vspace{.01in}
\subfigure[SDR v.s. $\lambda_1^2$, where $\lambda_0^2=0.85$ and $\theta_1=16^{\circ}$.]{
    \label{dJA:sub3} 
    \includegraphics[width=1.62 in]{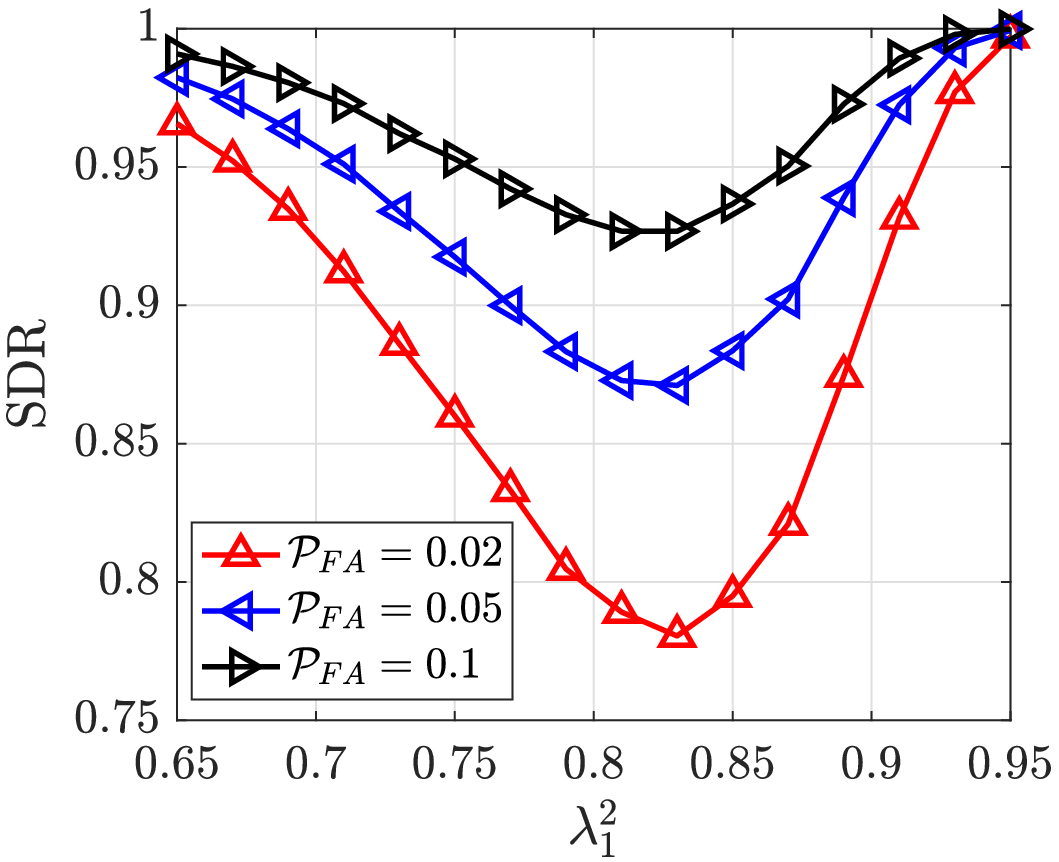}}
    \hspace{.01in}
\subfigure[SDR v.s. $P_1$, where $\lambda_0^2=0.85$ and $\phi_1=34^{\circ}$.]{
    \label{dJA:sub4} 
    \includegraphics[width=1.62 in]{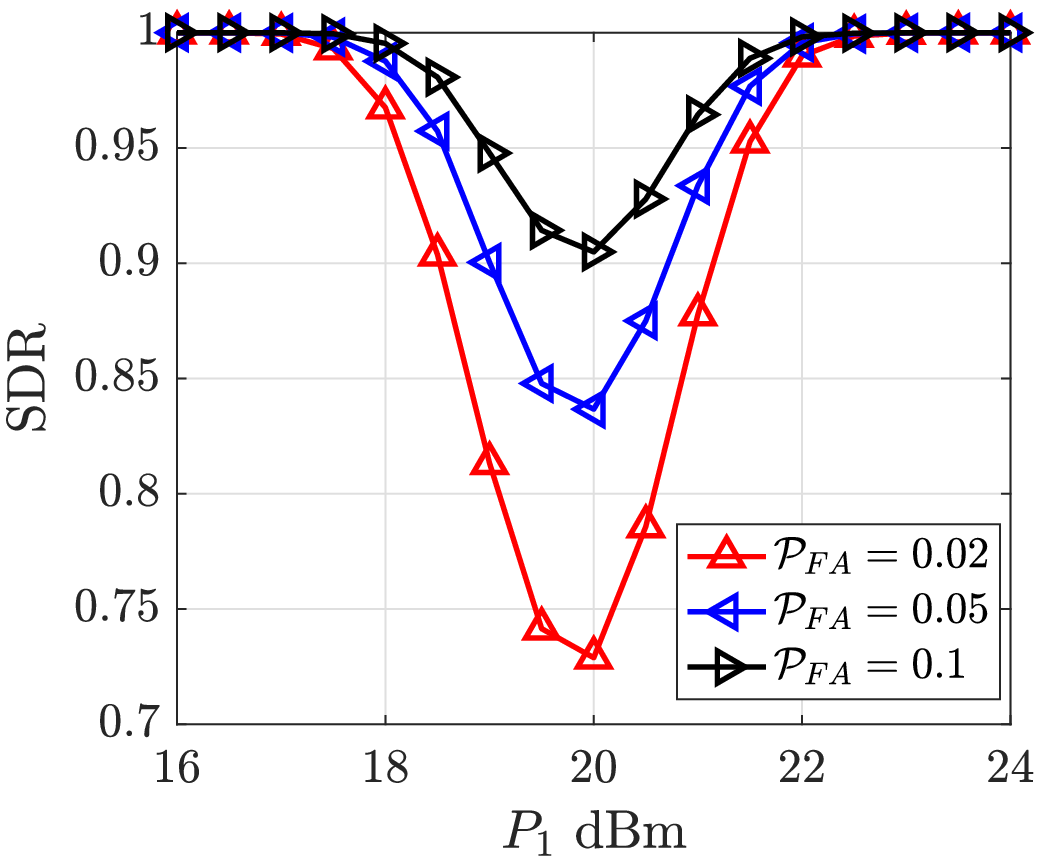}}
\caption{\small SDR under fixed FAR.}\label{Fig:dJA}
\end{center}
\vspace{-5mm}
\end{figure}
In this section, we provide some numerical results to show the performance of the proposed authentication  method. We assume the UAV is equipped with a T-shaped array, with $2M+1$ antennas placed along $x$-axis and $N$ antennas placed along $y$-axis. We assume the adjacent antennas are separated by half of a wavelength. Then, we have
$\bm{a}\left(\omega_i,\mu_i\right)=
[\bm{a}_X\left(\omega_i\right)^T,\bm{a}_Y\left(\mu_i\right)^T]^T$
where $\bm{a}_X\left(x\right) = [e^{\mathrm{j}M\pi x},\cdots,1,\cdots,
e^{-\mathrm{j}M\pi x}]^T$, and
$\bm{a}_Y\left(x\right) = [e^{-\mathrm{j}\pi x},\cdots,
e^{-\mathrm{j}N\pi x}]^T$.
In the simulation, we set $\sigma^2=0.01$ mW, $M=6$, $N=12$, and the UAV is at the height of $20$ m. Under $\mathcal{H}_0$, we set $\theta_0=15^\circ$,$\phi_0=30^\circ$,$P_0=20$ dBm. Under $\mathcal{H}_1$, the corresponding parameters are the same as those  under $\mathcal{H}_0$ unless specified.
The estimations of $\omega_1$ and $\mu_1$ in \eqref{OmigaMuEstimation} are obtained by extensively searching with step size given by $0.005$.

In Fig. \ref{Fig:Approximation}, the receiver operating characteristic curves are plotted under some given positions of the MA. The approximations of the SDR and the FAR are obtained by using the CCDF of $\tilde{T}^{\mathrm{SD}}$ and $\tilde{T}_2^{\mathrm{FA}}$, respectively. As we can see, the approximation results can accurately approximate the simulation results.

In Fig. \ref{Fig:dJA}, we plot the SDRs against some key physical layer parameters of the MA, i.e., $\theta_1$, $\phi_1$, $\lambda_1^2$, and $P_1$, while the FARs are fixed as $0.02$, $0.05$, and $0.1$.
{ In Fig. \ref{dJA:sub1} and Fig. \ref{dJA:sub2}, we show that with
$\left(\theta_1,\phi_1\right)$ deviating from $\left(\theta_0,\phi_0\right)$, the SDRs get improved. This is because the spoofing signal comes from an undesired direction which leads to the mismatch of the estimated and the desired directions of arrival. However, if $\left(\theta_1,\phi_1\right)$ closely approaches $\left(\theta_0,\phi_0\right)$, the SDRs decline sharply due to the limited spatial resolution.
In Fig. \ref{dJA:sub3}, we show that for a given location of the MA, the SDRs increase if $\lambda_1$ gets distinct from $\lambda_0$. In fact, the value of $\lambda_i$, for $i\in\{0,1\}$,  influences the strength of the LOS and the non-LOS components of the corresponding physical channel. When $\lambda_1$ is significantly distinct from $\lambda_0$, then the ratio between the LOS and non-LOS signal powers of the spoofing control signal from the MA will not coincide with that of the real control signal from the GCS.
In practice, the Rician-$\kappa$ factors are different at different locations, which can be utilized to improve the authentication performance. In Fig. \ref{dJA:sub4}, we plot the SDRs versus the transmit powers of the MA. In the simulation, the distances from the UAV to the GCA and the MA are the same. It can be observed that the lowest SDR appears under the cases when $P_0 = P_1$. This is because the deviation of $P_1$ from $P_0$ will increase or decrease the total received power at the UAV and thus becomes more possible to be successfully detected.
}

\section{Conclusion}
\label{Sec:Conclusion}

In this correspondence,
we focused on the authentication at the UAV-side by considering a MA transmitting forged control signal to pretend as the legitimate GCS.
We considered the worst case where the UAV has no prior knowledge of the MA.
We proposed a GLLR-based authentication method in this paper.
The exact FAR and SDR were hard to obtain due to the complicated form of the GLLR. To evaluate the authentication performance, accurate approximations were provided.

\appendix
The derivation of $\mathcal{P}\{\tilde{T}_k^{\mathrm{FA}}<\tau\}$ is provided in \eqref{DerivationofFAR} on the top of this page, where step $(a)$ is obtained by $x+y\rightarrow u$ and $y\rightarrow v$, and step $(b)$ is obtained by using
binomial expansion, and step $(c)$ is obtained by first integrating with respect to $v$ and then following the definition of $\gamma(n,a; b,c)$.

\end{document}